\DeclareMathOperator{\Tr}{Tr}
\newtheorem{Theorem}{Theorem}
\newtheorem*{Corollary}{Corollary}
\newtheorem*{Remark}{Remark}
\begin{document}

\title{Correlations and projective measurements in maximally entangled multipartite states}

\author{Arthur Vesperini}
\email[Email: ]{arthur_vespe@hotmail.fr}
\affiliation{DSFTA, University of Siena,\\ Via Roma 56, 53100 Siena, Italy
}%
\affiliation{Aix Marseille Université, Université de Toulon, CNRS, CPT, 13288 Marseille, France}
\date{\today}

\begin{abstract}
Multipartite quantum states constitute the key resource for quantum computation. The understanding of their internal structure is thus of great importance in the field of quantum information.
This paper aims at examining the structure of multipartite maximally entangled pure states, using tools with a simple and intuitive physical meaning, namely, projective measurements and correlations. 
We first show how, in such states, a very simple relation arises between post-measurement expectation values and pre-measurement correlations. 
We then infer the consequences of this relation on the structure of the recently introduced \textit{entanglement metric}, allowing us to provide an upper bound for the \textit{persistency of entanglement}.
The dependence of these features on the chosen measurement axis is underlined, and two simple optimization procedures are proposed, to find those maximizing the correlations.
Finally, we apply our procedures onto some prototypical examples.
\end{abstract}

\maketitle

\section{Introduction}

Entanglement, in addition to be one of the most historically puzzling properties of quantum mechanics, constitutes the main resource for quantum cryptography and computation, and quantum-based technologies. The quantum information community developed, in the past decades, an extensive number of approaches to characterize its abundant phenomenology \cite{GUHNE,horodecki}.

Full characterization of entanglement in multipartite states is a notoriously complex task \cite{ghz,Opt_estimation_ent,PhysRevA.95.062116,PhysRevA.67.022320,horodecki}. The mere definition of a measure of the entanglement of such states, is in itself a great challenge.

Yet, their complete analysis convokes numerous additional notions, as their \textit{$k$-separability}, \textit{$k$-producibility}, \textit{entanglement depth} or \textit{persistency of entanglement} \cite{GUHNE,BRS_persistent,horodecki}. 

Measurement processes and their understanding are of a major importance in the field of quantum computing. Measurement based quantum computation, which stands as a universal model of quantum computation, obviously heavily lie on the control of their effects on quantum states. \cite{Nielsen_2003,oneway_BR,meas_based_cluster,NIELSEN2006147}
\\

For the sake of simplicity, the present work will be focused on states $\ket{s}\in\mathcal{H}$, with $\mathcal{H}=\bigotimes_{\mu\in Q} \mathcal{H}_\mu$ the Hilbert space of dimension $2^N$ describing a set $Q$ of $N$ qubits. It should be noted that our results are generalizable to hybrid qudits systems with relatively few adjustments. 

We will hereafter denote $\sigma_k^\mu$, with $k=1,2,3$ the Pauli matrices acting on qubit $\mu$, $\bm{\sigma}^\mu=\big(\sigma_1^\mu,\sigma_2^\mu,\sigma_3^\mu\big)$ the Pauli vector acting on $\mu$, and $\sigma_{\bm{v}}^\mu=\bm{v}^\mu \cdot\bm{\sigma}^\mu = \sum\limits_{k=1,2,3}v_k^\mu\sigma_k^\mu$
 the Pauli observable on $\mu$ oriented in the direction $\bm{v}^\mu$. All of the vectors introduced in the following belong to $\mathbb{R}^3$. \\
 
As stated before, the ways of quantifying entanglement in multipartite states are manifold. In this work, we will solely refer to \textit{qubit-wise entanglement}, that is entanglement of bipartitions $(\mu,\mu^C)$, with $\mu\in Q$, and $\mu^C$ its complement on the set $Q$.

The reduced state of any qubit $\mu$ can be represented as a vector $\bra{s}\bm{\sigma}^\mu\ket{s}$ in the Bloch ball, i.e. a vector with norm lesser or equal to $1$. 
For a maximally entangled qubit $\mu$, it is the null vector, and we have
\begin{equation}\label{vanishing_expect_max_ent_mu}
\forall\, \bm{v}^\mu,\;  \bra{s}\sigma_{\bm{v}}^\mu\ket{s}=0,
\end{equation}
while fully factorizable qubits can always be represented by a well-defined vector on the Bloch sphere (i.e. of unit norm). This is consistent with the fact that a single qubit reduced density matrix $\rho^\mu=\Tr_{\mu^C}\big(\ket{s}\bra{s}\big)$ is maximally mixed, and has a maximal Von Neumann entropy $S(\rho_\mu)=-\Tr\big[\rho_\mu\log(\rho_\mu)\big]=\log(2)$. Yet the latter quantity is in fact the entropy of entanglement of the bipartition $(\mu,\mu^C)$, considered as a paradigmatic measure of bipartite entanglement \cite{Ent_entropy}. \\
 
The entanglement distance (ED), first defined in Ref. \cite{ED_2020}, is an entanglement measure for general multipartite pure states; it has been adapted in Ref. \cite{vesperini_entanglement_2023} to the more general framework of multipartite mixed states, and it has already found since then some interesting applications \cite{vafafard_nourmandipour,nourmandipour_entanglement_2021}. It finds its theoretical grounds on the Fubini-Study metric associated to the local-unitary invariant projective Hilbert space, called in this context the Entanglement Metric (EM). 

An EM of a state $\ket{s}$ is the real symmetric matrix of elements
\begin{equation}\label{ent_metric_elements}
    g_{\nu\mu}\big(\ket{s},\bm{v}^\mu,\bm{v}^\nu\big)= \bra{s} \sigma_{\bm{v}}^\mu \sigma_{\bm{v}}^\nu\ket{s} - \bra{s} \sigma_{\bm{v}}^\mu\ket{s}\bra{s} \sigma_{\bm{v}}^\nu\ket{s}.
\end{equation}
$g$ clearly happens to be a covariance matrix. This does not come as a surprise, as the indetermination in pure states, hence the (co-)variance, is always quantum in nature, and intimately linked to entanglement.

In particular, the diagonal elements of the EM, endowed with a minimization procedure, arise as an efficient measure of entanglement, with a behaviour similar to that of the entropy of entanglement. This is due to the fact that \eqref{vanishing_expect_max_ent_mu} implies that a maximally entangled qubit has variance $1$. The single-qubit ED is defined as
\begin{equation}\label{single-qubit_ED}
\begin{split}
E_\mu(\ket{s}):&=\min_{\bm{v}^\mu}\;g_{\mu\mu}\big(\ket{s},\bm{v}^\mu\big)\\
&= 1 - \max_{\bm{v}^\mu}|\bra{s}\sigma_{\bm{v}}^\mu\ket{s}|^2\\
&= 1 - |\bra{s}\bm{\sigma}^\mu\ket{s}|^2,
\end{split}
\end{equation}
which equates $1$ if $\mu$ is maximally entangled with the rest of the system, and $0$ if it is fully factorizable.

We choose here to use the latter definition of entanglement, which possesses the advantage of being very easy to compute, relative to the Von Neumann entropy. We further define the total entanglement of a state as $\sum\limits_{\mu\in Q}E_\mu(\ket{s})$.\\

In the present work, we primarily focus on maximally entangled state in the sense of Eq.\eqref{single-qubit_ED}. 
We start, in section \ref{sec:Theorems}, demonstrating a few simple theorems, which highlight the strong relationship between pre-measurement correlations and post-measurement average values, and show how the structure of the EM provides an upper bound to the persistency of entanglement. 
In section \ref{sec:optimization}, we derive two procedures in order to determine sets of measurement axis optimal with respect to pair-wise correlators or with respect to the induced total entanglement breaking; the question of the equivalency of these two optimal sets arises as an interesting open problem, to our best knowledge, yet to be solved. 
We apply our methods to a few examples in section \ref{sec:examples}.
Finally, in section \ref{sec:discussion}, we synthesize our results, and make a few remarks on possible continuations of this work; in particular, we examine the effect of several successive projective measurement on expectation values, and argue that a more thorough study of entanglement breaking in quantum states should investigate the behaviour of higher order correlations.

\section{First order projective measurements}\label{sec:Theorems}

Departing from a generic multipartite quantum state $\ket{s}\in\mathcal{H}$, the state $\ket{s'}$ obtained after a projective measurement of the qubit $\nu$ in the direction $\bm{m}^\nu$ write
\begin{equation}\label{proj_meas_sprime}
    \ket{s}\longrightarrow \ket{s'}=\frac{P_{\bm{m}}^\nu\ket{s}}{\sqrt{\bra{s}P_{\bm{m}}^\nu\ket{s}}},
\end{equation}
with
\begin{equation}
    P_{\bm{m}}^\nu:=\frac{\mathbb{I}+\sigma_{\bm{m}}^\nu}{2}
\end{equation}
the single qubit projector onto the eigenstate of $\sigma_{\bm{m}}^\nu$ of eigenvalue $1$.

The expectation value of an arbitrary qubit $\mu$ in the direction $\bm{v}^\mu$ in such a post-measurement state can be expressed as a function of the expectation values and the correlator of $\sigma_{\bm{m}}^\nu$ and $\sigma_{\bm{v}}^\nu$ in the initial state.

\begin{Theorem}\label{thm_1}
If $\ket{s}$ is maximally entangled in $\nu$ and $\mu$, and $\ket{s'}$ is the post-measurement state after a projective measure of $\sigma_{\bm{m}}^\nu$, then we have, for any measurement axis $\bm{v}^\mu$
\begin{equation}
\bra{s'}\sigma_{\bm{v}}^\mu\ket{s'}=\bra{s}\sigma_{\bm{v}}^\mu \sigma_{\bm{m}}^\nu\ket{s}
\end{equation}
\end{Theorem}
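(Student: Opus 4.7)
The plan is to start from the definition \eqref{proj_meas_sprime} of the post-measurement state and directly compute
\begin{equation*}
\bra{s'}\sigma_{\bm{v}}^\mu\ket{s'}=\frac{\bra{s}P_{\bm{m}}^\nu\,\sigma_{\bm{v}}^\mu\,P_{\bm{m}}^\nu\ket{s}}{\bra{s}P_{\bm{m}}^\nu\ket{s}},
\end{equation*}
and to simplify numerator and denominator separately using (i) the idempotence of $P_{\bm{m}}^\nu$, (ii) the commutativity of $\sigma_{\bm{v}}^\mu$ with $P_{\bm{m}}^\nu$ (they act on distinct tensor factors, since $\mu\neq\nu$), and (iii) the hypothesis that both $\mu$ and $\nu$ are maximally entangled, which kills single-qubit expectation values through Eq.~\eqref{vanishing_expect_max_ent_mu}.

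First I would handle the denominator. Expanding $P_{\bm{m}}^\nu=(\mathbb{I}+\sigma_{\bm{m}}^\nu)/2$ gives $\bra{s}P_{\bm{m}}^\nu\ket{s}=\tfrac12\bigl(1+\bra{s}\sigma_{\bm{m}}^\nu\ket{s}\bigr)$, and maximal entanglement of $\nu$ forces $\bra{s}\sigma_{\bm{m}}^\nu\ket{s}=0$, so the denominator reduces to $1/2$. Physically this just says that, for a maximally entangled qubit, each projective outcome occurs with equal probability.

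For the numerator, commutativity of $\sigma_{\bm{v}}^\mu$ and $P_{\bm{m}}^\nu$ together with $(P_{\bm{m}}^\nu)^2=P_{\bm{m}}^\nu$ collapses the sandwich to $\bra{s}\sigma_{\bm{v}}^\mu\,P_{\bm{m}}^\nu\ket{s}$. Expanding the projector once more yields
\begin{equation*}
\tfrac12\bra{s}\sigma_{\bm{v}}^\mu\ket{s}+\tfrac12\bra{s}\sigma_{\bm{v}}^\mu\sigma_{\bm{m}}^\nu\ket{s},
\end{equation*}
and the first term vanishes by \eqref{vanishing_expect_max_ent_mu} applied to $\mu$. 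Dividing by the denominator $1/2$ immediately produces the claimed identity.

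There is no real obstacle here: the statement is essentially a bookkeeping identity resting on two structural facts (commutativity across tensor factors and idempotence of the projector) together with the maximal-entanglement hypothesis, which is invoked twice. Both invocations are essential: without the one on $\nu$ the denominator would be outcome-dependent, and without the one on $\mu$ the right-hand side would pick up an extra $\bra{s}\sigma_{\bm{v}}^\mu\ket{s}$ term, recovering the \emph{connected} correlator of Eq.~\eqref{ent_metric_elements} rather than the bare product expectation value stated in the theorem. This observation is worth flagging explicitly, since it foreshadows the link between post-measurement averages and the EM that the paper will develop next.
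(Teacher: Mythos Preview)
Your proof is correct and follows essentially the same route as the paper: both start from the definition of $\ket{s'}$, use commutativity of $\sigma_{\bm{v}}^\mu$ with $P_{\bm{m}}^\nu$ and idempotence of the projector to reduce the numerator, expand $P_{\bm{m}}^\nu=(\mathbb{I}+\sigma_{\bm{m}}^\nu)/2$, and then invoke Eq.~\eqref{vanishing_expect_max_ent_mu} on both $\mu$ and $\nu$ to kill the single-qubit terms. The only cosmetic difference is that the paper writes the general ratio $\frac{\bra{s}\sigma_{\bm{v}}^\mu\ket{s}+\bra{s}\sigma_{\bm{v}}^\mu\sigma_{\bm{m}}^\nu\ket{s}}{1+\bra{s}\sigma_{\bm{m}}^\nu\ket{s}}$ first and then applies the hypothesis, whereas you simplify numerator and denominator separately; your added remark on the distinct roles of the two maximal-entanglement assumptions is a nice clarification beyond what the paper states explicitly.
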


\begin{proof}
From Eq.\eqref{proj_meas_sprime} we draw
\begin{equation} 
\begin{split}
        \bra{s'}\sigma_{\bm{v}}^\mu\ket{s'}& = \frac{\bra{s} P_{\bm{m}}^\nu\sigma_{\bm{v}}^\mu P_{\bm{m}}^\nu\ket{s}}{\bra{s}P_{\bm{m}}^\nu\ket{s}}= \frac{\bra{s} P_{\bm{m}}^\nu\sigma_{\bm{v}}^\mu\ket{s}}{\bra{s}P_{\bm{m}}^\nu\ket{s}} \\
        &=\frac{\bra{s}\sigma_{\bm{v}}^\mu\ket{s} + \bra{s}\sigma_{\bm{v}}^\mu \sigma_{\bm{m}}^\nu\ket{s}}{1+\bra{s}\sigma_{\bm{m}}^\nu\ket{s}},
\end{split}
\end{equation}
where we used the fact that $P_{\bm{m}}^\nu$ and $\sigma_{\bm{v}}^\mu$ commute with each other, and that $P_{\bm{m}}^\nu$ is idempotent.
By hypothesis, Eq.\eqref{vanishing_expect_max_ent_mu} applies here, hence our Theorem.
\end{proof}

\begin{Corollary}
If $\ket{s}$ is maximally entangled in $\nu$ and $\mu$ and $\exists\, \bm{v}^\mu$ such that $|\bra{s}\sigma_{\bm{v}}^\mu \sigma_{\bm{m}}^\nu\ket{s}|=1$, then the measurement of $\nu$ along the axis $\bm{m}^\nu$ completely breaks the entanglement of $\mu$, i.e. $ E_\mu\big(\ket{s'}\big) =0$.
\end{Corollary}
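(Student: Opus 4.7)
The plan is to combine Theorem 1 with the definition \eqref{single-qubit_ED} of the single-qubit ED. The argument is essentially a one-liner once the right substitutions are made, so the task is mostly to lay out the logical chain clearly.

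First I would invoke Theorem 1, whose hypotheses are exactly those assumed in the corollary (maximal entanglement in $\nu$ and $\mu$, and a projective measurement of $\sigma_{\bm{m}}^\nu$). This gives, for the specific direction $\bm{v}^\mu$ singled out by the hypothesis, the identity
\begin{equation}
\bra{s'}\sigma_{\bm{v}}^\mu\ket{s'} = \bra{s}\sigma_{\bm{v}}^\mu \sigma_{\bm{m}}^\nu\ket{s},
\end{equation}
so that $|\bra{s'}\sigma_{\bm{v}}^\mu\ket{s'}|=1$ for this particular $\bm{v}^\mu$.

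Next I would observe that the expectation value of any Pauli observable $\sigma_{\bm{w}}^\mu$ (with $|\bm{w}^\mu|=1$) is bounded in modulus by $1$, since such an observable has eigenvalues $\pm 1$. Hence the maximum in the second line of \eqref{single-qubit_ED} is at most $1$, and by the preceding step it is exactly $1$. Plugging into the definition,
\begin{equation}
E_\mu(\ket{s'}) = 1 - \max_{\bm{v}^\mu}|\bra{s'}\sigma_{\bm{v}}^\mu\ket{s'}|^2 = 1 - 1 = 0,
\end{equation}
which is the claim.

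There is essentially no obstacle: the only subtlety is recalling that the Bloch-vector-norm bound $|\bra{s'}\bm{\sigma}^\mu\ket{s'}| \leq 1$ forces the maximum to equal $1$ rather than merely exceed it, so that \eqref{single-qubit_ED} can be applied cleanly. Everything else follows by direct substitution from Theorem 1.
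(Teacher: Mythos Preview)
Your proof is correct and follows essentially the same approach as the paper: the paper simply substitutes Theorem~\ref{thm_1} into the definition \eqref{single-qubit_ED} to obtain $E_\mu(\ket{s'}) = 1 - \max_{\bm{v}^\mu}|\bra{s}\sigma_{\bm{v}}^\mu \sigma_{\bm{m}}^\nu\ket{s}|^2$ and then notes this vanishes under the hypothesis. Your version is slightly more explicit in justifying why the maximum is exactly $1$ (via the Bloch-vector bound), but the underlying argument is identical.
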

\begin{proof}
If Theorem \ref{thm_1} applies, we can rewrite the post-measurement ED of $\mu$
\begin{equation}\label{Ent_postmeas}
    E_\mu\big(\ket{s'}\big) = 1 - \max_{\bm{v}^\mu}|\bra{s}\sigma_{\bm{v}}^\mu \sigma_{\bm{m}}^\nu\ket{s}|^2 ,
\end{equation}
which equates $0$ if the above condition is fulfilled.
\end{proof}

\begin{Theorem}\label{thm-2}
For any state $\ket{s}$, $\forall \mu,\nu\in Q$ such that $\bra{s}\sigma_{\bm{v}}^\mu \sigma_{\bm{v}}^\nu\ket{s}=1$, the operators $\sigma_{\bm{v}}^\mu$ and $\sigma_{\bm{v}}^\nu$ are equivalent with respect to $\ket{s}$ (they act on it in the same fashion). In particular, this implies, $\forall\eta$,
\begin{align}
& \bra{s}\sigma_{\bm{v}}^\eta \sigma_{\bm{v}}^\nu\ket{s} = \bra{s}\sigma_{\bm{v}}^\eta \sigma_{\bm{v}}^\mu\ket{s}, \label{correlator_equality}\\
\text{and }&\bra{s}\sigma_{\bm{v}}^\eta \sigma_{\bm{v}}^\mu \sigma_{\bm{v}}^\nu\ket{s}=\bra{s}\sigma_{\bm{v}}^\eta\ket{s}.
\end{align}
It also results that the measure of $\sigma_{\bm{v}}^\nu$ yields the implicit measure of $\sigma_{\bm{v}}^\mu$. 
\end{Theorem}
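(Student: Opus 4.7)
The plan is to extract from the hypothesis $\bra{s}\sigma_{\bm{v}}^\mu \sigma_{\bm{v}}^\nu\ket{s}=1$ the stronger operator identity $\sigma_{\bm{v}}^\mu\ket{s}=\sigma_{\bm{v}}^\nu\ket{s}$, from which every consequence listed in the statement reduces to elementary substitution. The key preliminary observation is that, because $\mu\neq\nu$, the two Pauli observables act on distinct tensor factors and therefore commute; their product $A:=\sigma_{\bm{v}}^\mu\sigma_{\bm{v}}^\nu$ is Hermitian. Since the $\bm{v}$'s are unit directions, $A^2=(\sigma_{\bm{v}}^\mu)^2(\sigma_{\bm{v}}^\nu)^2=\mathbb{I}$, so the spectrum of $A$ is contained in $\{-1,+1\}$.

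The crucial step is a variational one: for a Hermitian operator $A$ whose largest eigenvalue is $\lambda_{\max}=1$, the equality $\bra{s}A\ket{s}=1$ forces $\ket{s}$ to lie entirely in the $+1$-eigenspace of $A$ (a one-line consequence of expanding $\ket{s}$ in an eigenbasis of $A$ and noting that the weights $|c_k|^2$ sum to $1$). Applied here, this yields $\sigma_{\bm{v}}^\mu\sigma_{\bm{v}}^\nu\ket{s}=\ket{s}$; multiplying on the left by $\sigma_{\bm{v}}^\nu$ and using $(\sigma_{\bm{v}}^\nu)^2=\mathbb{I}$ gives the desired equivalence $\sigma_{\bm{v}}^\mu\ket{s}=\sigma_{\bm{v}}^\nu\ket{s}$, which is the precise technical meaning of ``acting on $\ket{s}$ in the same fashion''.

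The announced identities then follow at once. For Eq.\eqref{correlator_equality}, substitute $\sigma_{\bm{v}}^\nu\ket{s}=\sigma_{\bm{v}}^\mu\ket{s}$ into the right-hand slot of $\bra{s}\sigma_{\bm{v}}^\eta\sigma_{\bm{v}}^\nu\ket{s}$. For the three-operator identity, the same substitution gives $\bra{s}\sigma_{\bm{v}}^\eta\sigma_{\bm{v}}^\mu\sigma_{\bm{v}}^\nu\ket{s}=\bra{s}\sigma_{\bm{v}}^\eta(\sigma_{\bm{v}}^\mu)^2\ket{s}=\bra{s}\sigma_{\bm{v}}^\eta\ket{s}$. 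For the implicit-measurement claim, the single-qubit projectors $P_{\bm{v}}^\mu$ and $P_{\bm{v}}^\nu$ (and their $-1$ counterparts) inherit $P_{\bm{v}}^\mu\ket{s}=P_{\bm{v}}^\nu\ket{s}$, so via Eq.\eqref{proj_meas_sprime} the two measurements produce identical outcome probabilities and, after normalisation, identical post-measurement states; measuring one therefore amounts to measuring the other. I do not foresee any genuine obstacle: the only mildly delicate point is the variational passage from expectation value to eigenstate, which is standard.
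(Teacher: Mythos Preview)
Your proposal is correct and follows essentially the same route as the paper: both proofs hinge on the observation that $\bra{s}\sigma_{\bm{v}}^\mu\sigma_{\bm{v}}^\nu\ket{s}=1$ forces $\ket{s}$ to be a $+1$-eigenvector of $\sigma_{\bm{v}}^\mu\sigma_{\bm{v}}^\nu$, from which the equivalence $\sigma_{\bm{v}}^\mu\ket{s}=\sigma_{\bm{v}}^\nu\ket{s}$ and the projector identities follow by elementary manipulation. The only difference is that you spell out the variational step (Hermiticity, spectrum $\subset\{-1,+1\}$, saturation of the maximum) that the paper simply asserts.
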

\begin{proof}
Under the hypothesis, $\ket{s}$ is eigenvector of $\sigma_{\bm{v}}^\mu \sigma_{\bm{v}}^\nu$ of eigenvalue $1$, and the following holds
\begin{equation}\label{sigmu-signu_equivalency}
\begin{split}
\sigma_{\bm{v}}^\mu \sigma_{\bm{v}}^\nu\ket{s}&  = \ket{s} \\
 \sigma_{\bm{v}}^\nu\ket{s}&  = \sigma_{\bm{v}}^\mu\ket{s} \\
 P_{\bm{v}}^\nu\ket{s}&  = P_{\bm{v}}^\mu\ket{s} \\
 P_{\bm{v}}^\nu\ket{s}&  = P_{\bm{v}}^\nu P_{\bm{v}}^\mu\ket{s}.
\end{split}
\end{equation}
\end{proof}

Let us now examine the consequences of these results on $g$. For the sake of clarity, we will hereafter drop its dependences and adopt the shortened notation $g_{\nu\mu}\big(\ket{s},\bm{v}^\mu,\bm{v}^\nu\big)=g_{\nu\mu}$.\\

\begin{Theorem}\label{thm-3}
Let $\ket{s}$ be a state maximally entangled $\forall\mu\in Q$ and $\{\bm{v}^\mu\}_\mu$ a choice of measurement directions such that $\forall\mu,\nu\in Q,\;g_{\mu\nu}=0$ or $\pm1$. Then:
    \begin{itemize}
        \item Up to a reordering of its indices (equivalently, a relabelling of the qubits), $g$ is diagonal by blocks filled with $\pm1$.
        \item The number $n$ of such blocks provides an upper bound to the persistency of entanglement $P_e$ of the state $\ket{s}$, i.e. $P_e\leq n$. In other words, the minimal number of local measurements necessary to fully break its entanglement is $n$ or less.
    \end{itemize}
\end{Theorem}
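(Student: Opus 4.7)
The plan is to introduce on $Q$ the relation $\mu \sim \nu \iff g_{\mu\nu}\neq 0$, and to show both bullets follow from the fact that this is an equivalence relation. Since every qubit is maximally entangled, Eq.~\eqref{vanishing_expect_max_ent_mu} gives $g_{\mu\nu}=\bra{s}\sigma_{\bm v}^\mu\sigma_{\bm v}^\nu\ket{s}$, with $g_{\mu\mu}=1$ on the diagonal, and by hypothesis every entry lies in $\{0,\pm 1\}$. Reflexivity ($g_{\mu\mu}=1$) and symmetry ($g$ is a symmetric matrix) are immediate; granted transitivity, the equivalence classes partition $Q$ into $n$ disjoint sets $C_1,\ldots,C_n$, and reordering the qubits so that each class is contiguous makes $g$ block-diagonal with every diagonal block filled by $\pm 1$ entries and every inter-block entry equal to zero.

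To establish transitivity, suppose $g_{\mu\nu}=\varepsilon_1$ and $g_{\nu\eta}=\varepsilon_2$ with $\varepsilon_{1,2}\in\{\pm 1\}$. The chain of manipulations underlying Theorem~\ref{thm-2} extends without modification to the $-1$ case: from $\sigma_{\bm v}^\mu\sigma_{\bm v}^\nu\ket{s}=\varepsilon_1\ket{s}$ one obtains $\sigma_{\bm v}^\mu\ket{s}=\varepsilon_1\sigma_{\bm v}^\nu\ket{s}$. Substituting this into $g_{\mu\eta}=\bra{s}\sigma_{\bm v}^\eta\sigma_{\bm v}^\mu\ket{s}$ (valid because the two Pauli operators commute on distinct qubits) yields $g_{\mu\eta}=\varepsilon_1 g_{\nu\eta}=\varepsilon_1\varepsilon_2\in\{\pm 1\}$, so $\mu\sim\eta$. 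This settles the first bullet.

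For the persistency bound I exhibit an explicit set of $n$ measurements. For each class $C_i$, pick a representative $\mu_i\in C_i$ and measure $\sigma_{\bm v}^{\mu_i}$ in the (already fixed) direction $\bm v^{\mu_i}$. The associated projectors act on pairwise disjoint qubits, so they commute and can be applied in any order. For a non-representative qubit $\mu_j^{(i)}\in C_i$, the perfect correlation $|g_{\mu_i\mu_j^{(i)}}|=1$ implies, as in Eq.~\eqref{sigmu-signu_equivalency}, that $\sigma_{\bm v}^{\mu_j^{(i)}}\ket{s}=\pm\sigma_{\bm v}^{\mu_i}\ket{s}$; since $\sigma_{\bm v}^{\mu_j^{(i)}}$ commutes with every measurement projector, a short calculation mirroring the proof of the Corollary to Theorem~\ref{thm_1} shows that the post-measurement state $\ket{\tilde s}$ is a $\pm 1$-eigenstate of each $\sigma_{\bm v}^{\mu_j^{(i)}}$. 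Hence $|\bra{\tilde s}\sigma_{\bm v}^{\mu_j^{(i)}}\ket{\tilde s}|=1$ and $E_{\mu_j^{(i)}}(\ket{\tilde s})=0$ for every qubit of $Q$, so $n$ local measurements destroy all single-qubit entanglement and $P_e\leq n$.

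The only genuinely non-trivial step is transitivity, and within it the key insight (already present in the proof of Theorem~\ref{thm-2}) that two Pauli observables saturating $|g_{\mu\nu}|=1$ act identically on $\ket{s}$ up to a sign. The remainder is a mostly bookkeeping argument: checking that the equivalence classes yield the desired block structure, and verifying that disjoint-qubit projectors commute so that the $n$ measurements do not interfere with one another and the perfect correlations within each unmeasured block are preserved throughout the sequence.
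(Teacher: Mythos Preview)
Your argument is correct and follows essentially the same line as the paper's: the block structure comes from the transitivity relation $|g_{\mu\nu}|=|g_{\nu\eta}|=1\Rightarrow|g_{\mu\eta}|=1$ inherited from Theorem~\ref{thm-2}, and the persistency bound from the fact that a single measurement per block disentangles that whole block via Theorem~\ref{thm_1}. The only notable difference is that for the second bullet the paper simply computes $g'_{\mu\mu}=1-|g_{\nu\mu}|^2$ after one measurement and leaves the iteration over the $n$ blocks implicit, whereas you treat the $n$ commuting projectors jointly and verify directly that the final state is an eigenstate of every $\sigma_{\bm v}^{\mu_j^{(i)}}$; this is a minor presentational choice, not a different method.
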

\begin{proof}
From Eq.\eqref{vanishing_expect_max_ent_mu}, it is clear that, $\forall\mu,\nu\in Q$, Eq.\eqref{ent_metric_elements} simplifies as
\begin{equation}\label{gmunu_maxent}
        g_{\nu\mu}= \bra{s} \sigma_{\bm{v}}^\mu \sigma_{\bm{v}}^\nu\ket{s},
\end{equation}
and, from Theorem \ref{thm-2}, we see that, $\forall\eta\in Q$, the following transitivity relation holds
\begin{equation}
|g_{\nu\mu}|=|g_{\nu\eta}|=1 \implies |g_{\mu\eta}|=1
\end{equation}

Added with the symmetry of $g$, this proves the first part of the Theorem. 

Using Theorem \ref{thm_1} together with Eq.\eqref{gmunu_maxent}, we can rewrite the diagonal elements of $g$ after a measure of $\sigma_{\bm{m}}^\nu$ as
\begin{equation}\label{gmumu_postmeas}
    g_{\mu\mu}' = 1 - |g_{\nu\mu}|^2 
\end{equation}
where $g'=g(\ket{s'})$. This straightforwardly implies that the measure of any qubit belonging to a block will collapse entanglement for the whole block, hence the second part of the Theorem.
\end{proof} 

The structure of $g$ gives valuable insights on \textit{persistency of entanglement} $P_e$ in such maximally entangled states. First introduced in \cite{BRS_persistent}, it quantifies the minimal number of measurements needed to completely disentangle a quantum state.

The notion of \textit{persistency of entanglement} is distinct from that of \textit{$k$-separability}\footnote{As a counter example, the state $\frac{1}{2}\big(\ket{0000}+\ket{0011}+\ket{1100}-\ket{1111}$ has an EM with $n=2$ blocks when $\forall\mu,\;\bm{v}^\mu=(0,0,1)$, hence has $P_e=2$ and yet is not biseparable.}. They are however related, as the entanglement of a maximally entangled $k$-separable state can clearly be fully broken by $k$ measurements. In other words, the persistency of entanglement provides an upper bound to the separability, and we have $n\geq P_e\geq k$.

Because $g$ only encodes informations on the effects of first order projective measurement (i.e. contains only two-points correlators), it may not capture the actual $P_e$, hence only provides an upper bound. This is due to the fact that new non-vanishing correlation patterns may arise after one or more projective measurements, entailing diminished $P_e$ relative to our first guess. The Briegel Raussendorf state with $N>4$, detailed in section \ref{subsec:BRS}, constitutes an example of such a situation. \\

Let us stress the dependence of $g$ on the set of unit vectors $\{\bm{v}^\mu\}$, representing directions of measurements. This point is of great importance because, as a different $g$ will arise from a different set $\{\bm{v}^\mu\}$, such are also the correlation patterns and subsequent entanglement breakings highlighted in the above. In order to bound $P_e$ as finely as possible, one thus needs to find the appropriate set $\{\bm{v}^\mu\}$.

\section{Optimization of the measure}\label{sec:optimization}
 
We are now looking for the measurement directions $\bm{v}^\nu$ optimizing the correlators in Eq.\eqref{ent_metric_elements}.\\ 

First, let us perform the pair-wise optimization of the correlators
\begin{equation}\label{optimization_pairwise_corr}
\bra{s}\sigma_{\bm{v}}^\mu\sigma_{\bm{v}}^\nu\ket{s} = \sum_{i,j=1}^3 v_i^\mu v_j^\nu \bra{s}\sigma_i^\mu\sigma_j^\nu\ket{s} = 
(\bm{v}^\mu)^T\bm{C}_s^{\mu\nu}\bm{v}^\nu,
\end{equation}
where $\bm{C}_s^{\mu\nu}$ is the non-symmetric real matrix of elements
\begin{equation}
\big(\bm{C}_s^{\mu\nu}\big)_{ij}=\bra{s}\sigma_i^\mu\sigma_j^\nu\ket{s}
\end{equation}
i.e. the spin correlation matrix. Optimization with respect to both measurement axis yields
\begin{equation}
\begin{split}
\bm{C}_s^{\mu\nu}\bm{v}^\nu&=\lambda\bm{v}^\mu\\
(\bm{v}^\mu)^T\bm{C}_s^{\mu\nu}&=\lambda(\bm{v}^\nu)^T,
\end{split}
\end{equation}
where the superscript $T$ stands for transpose. One can check by easy direct calculation that the eigenvalues indeed coincide. By insertion of the transpose of these two equations, we infer the eigenvalue equations
\begin{equation}
\begin{split}
(\bm{C}_s^{\mu\nu})^T\bm{C}_s^{\mu\nu}\bm{v}^\nu&=\lambda^2\bm{v}^\nu\\
\bm{C}_s^{\mu\nu}(\bm{C}_s^{\mu\nu})^T\bm{v}^\mu&=\lambda^2\bm{v}^\mu,
\end{split}
\end{equation}
of which largest eigenvalue solutions evidently correspond to the measurement directions maximizing Eq.\eqref{optimization_pairwise_corr}. \\

We can instead be interested in finding the measurement axis $\bm{m}^\nu$ that optimally disentangle the entire state $\ket{s}$. Provided the latter is maximally entangled, we have, as a consequence of Theorem \ref{thm_1}
\begin{equation}
\big|\bra{s'}\bm{\sigma}^\mu\ket{s'}\big|^2 = \sum_{i=1}^3\big|\bra{s'}\sigma_i^\mu\ket{s'}\big|^2 = \sum_{i=1}^3\big|\bra{s}\sigma_i^\mu\sigma_{\bm{m}}^\nu\ket{s}\big|^2,
\end{equation}
hence the total entanglement after the measure
\begin{equation}\label{optimization_totent}
\begin{split}
    E\big(\ket{s'}\big)&=\sum_{\mu} E_\mu(\ket{s'})= N - 1  - \sum_{\mu\in \nu^C}|\bra{s'}\bm{\sigma}^\mu\ket{s'}|^2\\
    &=N - 1 - \sum_{\mu\in \nu^C}\sum_{i=1}^3\big|\bra{s}\sigma_i^\mu\sigma_{\bm{m}}^\nu\ket{s}\big|^2\\
    &=N - 1 - \sum_{\mu\in \nu^C}\sum_{i=1}^3\bra{s}\sigma_{\bm{m}}^\nu\sigma_i^\mu\ket{s}\bra{s}\sigma_i^\mu\sigma_{\bm{m}}^\nu\ket{s}\\
    &=N - 1 - \bra{s}\sigma_{\bm{m}}^\nu\Big(\sum_{\mu\in \nu^C}\sum_{i=1}^3\sigma_i^\mu\ket{s}\bra{s}\sigma_i^\mu\Big)\sigma_{\bm{m}}^\nu\ket{s}\\
    &=N - 1 - \sum_{j,k=1}^3 m_j^\nu m_k^\nu\bra{s}\sigma_j^\nu\Big(\sum_{\mu\in \nu^C}\sum_{i=1}^3\sigma_i^\mu\ket{s}\bra{s}\sigma_i^\mu\Big)\sigma_k^\nu\ket{s},
\end{split}
\end{equation}
with $\nu^C=Q\setminus\{\nu\}$.
Let us define $\bm{B}_s^{\nu}$, the measurement-induced entanglement breaking matrices (MIEB) of the state $\ket{s}$, of elements
\begin{equation}\label{MIEB-matrix}
    \Big(\bm{B}_s^{\nu}\Big)_{jk}=\bra{s}\sigma_j^\nu\,\Sigma_s^{\nu^C}\,\sigma_k^\nu\ket{s},
\end{equation}
with $\Sigma_s^{\nu^C}=\sum\limits_{\mu\in \nu^C}\sum\limits_{i=1}^3\sigma_i^\mu\ket{s}\bra{s}\sigma_i^\mu$.
Its diagonalization straightforwardly yields the results of the optimization problem, as the eigenvectors $\widetilde{\bm{m}}^\nu$ associated to the largest (resp. smallest) eigenvalue of $\bm{B}^{\nu}\big(\ket{s}\big)$ corresponds to the minimum (resp. maximum) of Eq.\eqref{optimization_totent}. The eigenvalues themselves simply equate the total additional loss of entanglement after the corresponding measurement. It results that a comparison of the spectra of the $N$ MIEB matrices  allow to easily find the ``weak spots'' of $\ket{s}$, that is the qubits of which the measurement can break entanglement the most.

Interestingly, the largest eigenvalue of $\bm{B}_s^{\nu}$ might have multiplicity greater than one. In such cases, any of the corresponding eigenvectors or linear combination of them maximize Eq.\eqref{optimization_totent}.

This result can straightforwardly be adapted to find the measurement axis optimizing the entanglement breaking of a subset $Q'\subset Q$ with $\nu\notin Q'$, by simply replacing $\Sigma_s^{\nu^C}$ with $\Sigma_s^{Q'}=\sum\limits_{\mu\in Q'}\sum\limits_{i=1}^3\sigma_i^\mu\ket{s}\bra{s}\sigma_i^\mu$.\\

The question naturally arises as to know whether the solutions of Eq.\eqref{optimization_pairwise_corr} correspond in general to those of Eq.\eqref{optimization_totent}.

\begin{Remark}
The existence, in the general case, of one (or several) set of measurement axis $\{\bm{v}^\nu\}_\nu^{opt}$ optimizing \textit{simultaneously} every correlator in the EM remains unclear. \footnote{To clarify the meaning of this problem, let us imagine a state $\ket{s}$ such that $\nexists\{\bm{v}^\nu\}_\nu^{opt}$. Then there exists some qubit $\mu$ of which the correlation with $\nu$ is maximal in a direction $v_1^\mu$, and the correlation with $\eta\neq\nu$ is maximal in a direction $v_2^\mu\neq v_1^\mu$.}
\begin{itemize}
\item If $\{\bm{v}^\nu\}_\nu^{opt}$ indeed always exists, it would be enough, to completely probe the pair-wise correlation patterns of a given state, to solve $\lceil N/2 \rceil$ equations of the form of Eq.\eqref{optimization_pairwise_corr} or, equivalently, to diagonalize $N$ MIEB \eqref{MIEB-matrix}.

\item If, on the contrary, it doesn't, such a complete probing requires to solve all of the $N(N-1)/2$ equations of the form of Eq.\eqref{optimization_pairwise_corr}.
\end{itemize}
\end{Remark}

Yet, in all of the example we considered, the solutions of Eq.\eqref{optimization_pairwise_corr} equate those of Eq.\eqref{optimization_totent}, hence we were able to determine $\{\bm{v}^\nu\}_\nu^{opt}$. We were not able find any counter-example. We will thus assume from now on that this set exists, in particular in Section \ref{sec:examples}, as this assumption allows us to remain much more concise, by only diagonalizing the $N$ MIEB.

\section{Examples and applications}\label{sec:examples}

\subsection{Briegel-Raussendorf states}\label{subsec:BRS}

The Briegel-Raussendorf states (BRS) form a family of quantum states, introduced in \cite{BRS_persistent}, They are, up to local unitary transformation, equivalent to cluster states (also coined as graph states in the litterature), which were proposed as a model for the measurement-based quantum computer \cite{oneway_BR,NIELSEN2006147,Nielsen_2003,meas_based_cluster}.\\

BRS are defined, for an arbitrary number of qubits $N$ on a $d$-dimensional lattice, as
\begin{equation}
\ket{\phi(\varphi)} = U(\varphi)\ket{+}^{\otimes N},
\end{equation}
where $\ket{+}^\mu$ is the eigenstate of the operator $\sigma_1^\mu$ of eigenvalue $1$, and
\begin{equation}\label{UBRS}
U(\varphi) = \exp\Big\{-i\varphi\sum_{<\mu,\nu>}P_0^\mu P_1^\nu\Big\},
\end{equation}
where the summation runs over all the pairs of nearest neighbours, and $P_{0(1)}^\mu$ is the projector onto the eigenstates of $\sigma_3^\mu$ of eigenvalue $\pm 1$.
We are solely interested in the case $\varphi=\pi$, as it results in a maximally entangled state. We will furthermore restrict ourselves to the study of the $1$-dimensional case.
\begin{equation}
\ket{\phi_N} = \prod_{\mu=0}^{N-2}\frac{1}{2}\big(\mathbb{I}-\sigma_3^\mu+\sigma_3^{\mu+1}+\sigma_3^\mu\sigma_3^{\mu+1}\big)\ket{+}^{\otimes N}.
\end{equation}

It has been shown in \cite{ED_2020} that the $E_\mu(\ket{\phi_N})=0$, $\forall\mu$. \\

\paragraph*{$N=3$.}

It is known that the $3$-qubits BR state is local unitary (LU) equivalent to the Greenberger-Horne-Zeilinger state \cite{BRS_persistent}, a prototypical example of maximally entangled and maximally correlated state. We hence expect that $\exists\{\bm{v}^\mu\}$ such that $\forall\mu,\nu,\,g_{\mu\nu}=1$.

We only need to compute the three following MIEB matrices \eqref{MIEB-matrix}
\begin{equation}
\begin{split}
\bm{B}_{\phi_3}^{0} & = \bm{B}_{\phi_3}^{2}= \begin{pmatrix}
2 & 0 & 0 \\
0 & 0 & 0 \\
0 & 0 & 0 
\end{pmatrix}\text{, hence  }
\bm{v}^0=\bm{v}^2=(1,0,0) \text{, and }\\
\bm{B}_{\phi_3}^{1} &=\begin{pmatrix}
0 & 0 & 0 \\
0 & 0 & 0 \\
0 & 0 & 2 
\end{pmatrix}\text{, hence  } \bm{v}^1=(0,0,1).
\end{split}
\end{equation}
The maximal eigenvalues of these matrices equate $2$ because each qubit is maximally correlated with two others.

This yields the EM
\begin{equation}
g\big(\ket{\phi_3},\{\bm{v}^\mu\}\big)=\begin{pmatrix}
1 & 1 & -1 \\
1 & 1 & -1 \\
-1 & -1 &1 .
\end{pmatrix}
\end{equation}
A single measurement is thus sufficient to completely disentangle $\ket{\phi_3}$. We have $P_e(\ket{\phi_3})=1$, hence this state is genuinely entangled.\\

\paragraph*{$N=4$.}

In this case, we have
\begin{equation}
\begin{split}
\bm{B}_{\phi_4}^{0} &=\bm{B}_{\phi_4}^{3} =\begin{pmatrix}
1 & 0 & 0 \\
0 & 0 & 0 \\
0 & 0 & 0 
\end{pmatrix}\text{, hence } \bm{v}^0=\bm{v}^3=(1,0,0), \text{ and}\\
\bm{B}_{\phi_4}^{1}& =\bm{B}_{\phi_4}^{2} =\begin{pmatrix}
0 & 0 & 0 \\
0 & 0 & 0 \\
0 & 0 & 1 
\end{pmatrix}\text{, hence } \bm{v}^1=\bm{v}^2=(0,0,1),
\end{split}
\end{equation}
thus the EM writes
\begin{equation}
g(\{\bm{v}^\mu\})=\begin{pmatrix}
1 & 1 & 0 & 0 \\
1 & 1 & 0 & 0 \\
0 & 0 & 1 & -1 \\
0 & 0 & -1 & 1  
\end{pmatrix},
\end{equation}
hence $P_e(\ket{\phi_4})\leq 2$. \\

\paragraph*{$N>4$.}

In general, for a chain of $N$ qubits, we have
\begin{equation}
\begin{split}
\bm{B}_{\phi_N}^{0} &=\bm{B}_{\phi_N}^{N)} =\begin{pmatrix}
1 & 0 & 0 \\
0 & 0 & 0 \\
0 & 0 & 0 
\end{pmatrix}, \text{ hence } \bm{v}^0=\bm{v}^{N}=(1,0,0), \text{ and } \\
\bm{B}_{\phi_N}^{1} &=\bm{B}_{\phi_N}^{N-1)} =\begin{pmatrix}
0 & 0 & 0 \\
0 & 0 & 0 \\
0 & 0 & 1 
\end{pmatrix}, \text{ hence } \bm{v}^1=\bm{v}^{N-1}=(0,0,1),
\end{split}
\end{equation}
while $\bm{B}_{\phi_N}^{\nu}=\bm{0},\;\forall\nu\notin \{0,1,N-1,N\}$. 

The EM hence contains two $2\times 2$ blocks in its upper left and lower right corners, while the remaining part is just a diagonal filled with ones (which can be seen as trivial blocks of size $1\times1$), hence the number of its blocks is $N-2$. Its persistency of entanglement is however known to be $P_e(\ket{\phi_N})=\lfloor \frac{N}{2} \rfloor$ \citep{BRS_persistent}.

This discrepancy is a paradigmatic example of the fact that Theorem \ref{thm-3} is in general insufficient to capture exactly $P_e$. Measurements can indeed be performed on $\ket{\phi_N}$, after which some elements of the EM will cease to be null. The post-measurement EM thus exhibits some new non trivial blocks, accounting for this discrepancy. \\

\subsection{Supersinglet states}

The supersinglet states, first introduced in \cite{Supersinglets}, form a class of maximally entangled pure states with the property of being invariant under any simultaneous LU operation acting on all qubits, i.e. $U^{\otimes N}\ket{S_N}=e^{i\phi}\ket{S_N}$, with $U$ an arbitrary LU operator. \cite{Supersinglets,Supersinglets2,GUHNE}. \\

From this sole property, a number of facts can be drawn.

First, $\forall U$, $\forall\bm{v}^\nu$, we have
\begin{equation}
\begin{split}
\bra{S_N}\sigma_{\bm{v}}^\nu\ket{S_N} & = \bra{S_N}U^{\dagger\otimes N}\sigma_{\bm{v}}^\nu U^{\otimes N}\ket{S_N} \\
&= \bra{S_N}U^{\nu\dagger}\sigma_{\bm{v}}^\nu U^\nu\ket{S_N} = \bra{S_N}\sigma_{\bm{v}'}^\nu\ket{S_N},
\end{split}
\end{equation}
yet Pauli expectation values cannot be isotropic unless null, hence Eq.\eqref{vanishing_expect_max_ent_mu} is verified and $E_\nu(\ket{S_N})=1,\,\forall\nu$.

Theorem \ref{thm_1} hence applies here and, choosing $U$ to be a rotation around the axis $\bm{m}^\nu$, hence leaving $\sigma_{\bm{m}}^\nu$ unchanged, we can write
\begin{equation}
\begin{split}
\bra{S_N'}\sigma_{\bm{v}}^\mu\ket{S_N'}&=\bra{S_N}\sigma_{\bm{v}}^\mu\sigma_{\bm{m}}^\nu\ket{S_N}\\
&=\bra{S_N}U^{\dagger\otimes N}\sigma_{\bm{v}}^\mu\sigma_{\bm{m}}^\nu U^{\otimes N}\ket{S_N}\\
&=\bra{S_N}\big(U^{\mu\dagger}\sigma_{\bm{v}}^\mu U^\mu\big)\sigma_{\bm{m}}^\nu \ket{S_N}\\
&=\bra{S_N}\sigma_{\bm{v}'}^\mu\sigma_{\bm{m}}^\nu\ket{S_N}=\bra{S_N'}\sigma_{\bm{v'}}^\mu\ket{S_N'},
\end{split}
\end{equation}
where $\ket{S_N'}$ is the state post-measurement of $\sigma_{\bm{m}}^\nu$.
The same argument as above leads to $\bra{S_N}\sigma_{\bm{v}}^\mu\sigma_{\bm{m}}^\nu\ket{S_N}\neq 0$ if and only if $\bm{v}^\mu=\bm{v}^{\mu\prime}$, that is if $\bm{v}^\mu=\pm\bm{m}^\nu$.

It results $\bm{B}_{S_N}^{\mu}\propto\mathbb{I},\;\forall\mu\in Q$. This means that, provided that the qubits are measured along the same axis, the correlators are always maximal. The optimal set of measurement axis is thus any uniform set $\{\bm{v}^\mu\}^{uni}$. \\

The whole class of supersinglet states of four qubits is spanned by the following two states 
\begin{equation}
\begin{split}
\ket{S^{1}_4}&=\frac{1}{\sqrt{3}}\Big(\ket{0011} + \ket{1100} -\frac{1}{2}\big( \ket{0101} + \ket{0110} + \ket{1001} + \ket{1010}\big)\big)\\
\ket{S^{2}_4}&=\frac{1}{2}\Big(\ket{0101} + \ket{1010} - \ket{0110} - \ket{1001} \big)\big).
\end{split}
\end{equation}
We can hence write a general 4-qubits supersinglet state as $\ket{S_4(a,b)}=a\ket{S^{1}_4}+b\ket{S^{2}_4}$, with $|a|^2+|b|^2$.

We can choose arbitrarily a single measurement axis, for instance $\bm{x}=(1,0,0)$, and straightforwardly compute the optimal EM
\begin{equation}
g(\{\bm{v}^\mu\}^{uni})=\begin{pmatrix}
1 & \alpha & \gamma & \beta \\
\alpha & 1 & \beta & \gamma \\
\gamma & \beta & 1 & \alpha \\
\beta & \gamma & \alpha & 1  
\end{pmatrix},
\end{equation}
with 
\begin{equation}
\begin{split}
\alpha&=\frac{|a|^2}{3} - |b|^2,\\
\beta&=\frac{2}{3}(\sqrt{3}\Re(\bar{a}b)-|a|^2), \text{and}\\
\gamma&=-\frac{2}{3}\big(\sqrt{3}\Re(\bar{a}b)+|a|^2\big),
\end{split}
\end{equation}
where the bar over a letter denotes complex conjugation and $\Re(\bar{a}b)$ is the real part of the complex number $\bar{a}b$.\\

This simple expression for the EM allow us to remark some interesting cases arising for specific values of the parameters $a$ and $b$.

Trivially, if $a=0$ and $|b|=1$, we immediately get a block diagonal matrix containing two blocks, as is to be expected, since $\ket{S_4^{(2)}}$ is a tensor product of two Bell states $\ket{\phi_-}=\frac{1}{\sqrt{2}}(\ket{01}-\ket{10})$. In this case, $P_e(\ket{S_4(0,1)})=2$ and the state is of course biseparable.

If $a=\frac{\sqrt{3}e^{i\phi_a}}{2}$ and $b=\frac{e^{i\phi_b}}{2}$, a few calculations lead to 
\begin{equation}
g(\{\bm{v}^\mu\}^{uni})=\begin{pmatrix}
1 & 0 & -c^2 & -s^2 \\
0 & 1 & -s^2 & -c^2 \\
-c^2 & -s^2 & 1 & 0 \\
-s^2 & -c^2 & 0 & 1  
\end{pmatrix},
\end{equation}
with $c=\cos(\frac{\phi_a-\phi_b}{2})$ and $s=\sin(\frac{\phi_a-\phi_b}{2})$.
Thus, for $\phi_a=\phi_b$ and for $\phi_b=\phi_a+\pi$, the EM of this state, up to qubits permutations, contains two blocks, and $P_e(\ket{S_4(\frac{\sqrt{3}e^{i\phi_a}}{2},\frac{e^{i\phi_a}}{2})})=P_e(\ket{S_4(\frac{\sqrt{3}e^{i\phi_a}}{2},\frac{e^{i(\phi_a+\pi)}}{2})})=2$.

\section{Discussion}\label{sec:discussion}

In this work, we showed how, in pure quantum states, entanglement entails a strong link between correlations and post-measurement expectation values.

The EMs, i.e. covariance matrices, contain valuable informations on the statistics of post-measurement states, and on the patterns that can emerge from projective measurements of entangled states. In particular, its block structure is directly linked to the persistency of entanglement. 

We further provided two straightforward procedures of optimization of the Pauli correlators, and observe that they might not, in principle, yield equivalent results. By doing so, we unravel an opened problem, which hasn't, to our best knowledge, been tackled with yet, namely the existence of a set of measurement axis simultaneously optimizing all of these correlators.

These procedures further allow us to recover, if it exists, the optimal EM, along with an upper bound for the persistency of entanglement. \\

Unfortunately, as emphasized above, the information retrieved by the use of the EM is incomplete: since the effect of more than one measurement are not accounted for in this framework, our approach fails to recover a number of important features of some complex entangled states (as BR states or cluster states), amongst which the exact persistency of entanglement.

Multipartite maximally entangled states might indeed possess qubits with only vanishing two-points correlations, regardless of the choice of measurement axis. Somehow counter-intuitively, the measurement of such qubits, though disentangling the concerned qubit from the rest of the system, do not break any entanglement on the latter. 
Yet it modifies the state, and in particular might bring along some new non-vanishing correlators (i.e. off-diagonal terms in the EM).

This observation motivates the study of higher order measurement schemes. 

Let us consider an ordered subset $\mathcal{M}\subset Q$ of $M$ qubits on which successive projective measurements are performed. 

The generalization of \eqref{proj_meas_sprime} then yields
\begin{equation}
    \ket{s}\longrightarrow \ket{s^{_\mathcal{M}}}=\frac{\prod\limits_{\nu\in\mathcal{M}} P_{\bm{m}}^\nu\ket{s}}{\sqrt{\bra{s}\prod\limits_{\nu\in\mathcal{M}}P_{\bm{m}}^\nu\ket{s}}},
\end{equation}
and, accordingly, the expectation value of an arbitrary unmeasured qubit $\mu$ after such a series of measurement
\begin{equation}
\setlength{\jot}{10pt}
    \bra{s^{_\mathcal{M}}}\sigma_{\bm{v}}^\mu\ket{s^{_\mathcal{M}}} = \frac{\sum\limits_{k=0}^{M-1}\sum\limits_{X\in[\mathcal{M}]^k}\bra{s}\sigma_{\bm{v}}^\mu \prod\limits_{\nu\in X}\sigma_{\bm{m}}^{\nu}\ket{s}}{\sum\limits_{k=0}^{M-1}\sum\limits_{X\in[\mathcal{M}]^k}\bra{s} \prod\limits_{\nu\in X}\sigma_{\bm{m}}^{\nu}\ket{s}},
\end{equation}
where $[\mathcal{M}]^k$ is the set of all the unordered $k$-subsets (i.e. subsets of cardinal $k$) of $\mathcal{M}$.\\

It results that, if all the two-points correlators vanish, one can examine higher order correlators to investigate the breaking of entanglement after a series of measurement rather a single one. 

It would be of great interest to pursue this research by a thorough study of higher order covariance tensors, or to devise new procedures and methods, in order to grasp the effects of series of measurements on entangled states, in a more exhaustive fashion.

\begin{acknowledgments}
The author acknowledges support from the RESEARCH SUPPORT
PLAN 2022 - Call for applications for funding allocation to research projects curiosity driven (F CUR) - Project ”Entanglement Protection of Qubits’ Dynamics in a Cavity” – EPQDC , and the support by the Italian National Group of
Mathematical Physics (GNFM-INdAM).
The author further acknowledges useful discussions with Roberto Franzosi.
\end{acknowledgments}

\bibliography{references}

\end{document}